\documentclass[a4paper,twocolumn,11pt,unpublished]{quantumarticle}
\pdfoutput=1
\usepackage[utf8]{inputenc}
\usepackage[english]{babel}
\usepackage[T1]{fontenc}
\usepackage{amsmath}
\usepackage{amssymb}
\usepackage{amsthm}
\usepackage{algorithmic}

\newcounter{algorithm}
\makeatletter
\newenvironment{algorithm}[1][htbp]{%
  \par\vskip6pt
  \refstepcounter{algorithm}%
  \noindent\rule{\linewidth}{0.8pt}\par\vskip2pt
  \footnotesize
}{%
  \par\vskip2pt\noindent\rule{\linewidth}{0.8pt}%
  \par\vskip6pt
}
\newcommand{\AlgCaption}[1]{%
  \par\noindent{\raggedright\textbf{Algorithm~\thealgorithm}\enspace #1\par}\vskip2pt}
\makeatother
\usepackage{graphicx}
\usepackage{booktabs}
\usepackage[numbers,sort&compress]{natbib}
\usepackage{hyperref}

\newtheorem{theorem}{Theorem}
\newtheorem{definition}{Definition}

\newtheorem{proposition}{Proposition}

\begin{document}

\title{Distributed synthesis of arbitrary graph states in quantum networks via rank-two GF(2) reduction}
\author{Xiaoyi Zheng}
\email{xiaoyi.zheng@mpu.edu.mo}
\affiliation{%
 Faculty of Applied Sciences, Macao Polytechnic University, 999078, Macao SAR, China
}%

\author{Lin Chen}
\email{lchen@mpu.edu.mo}
\affiliation{%
 Faculty of Applied Sciences, Macao Polytechnic University, 999078, Macao SAR, China
}%

\author{Chan-Tong Lam}
\email{ctlam@mpu.edu.mo}
\affiliation{%
 Faculty of Applied Sciences, Macao Polytechnic University, 999078, Macao SAR, China
}%
\maketitle

\begin{abstract}
Existing schemes for synthesizing graph states in quantum networks are essentially edge-by-edge constructions, so quantities such as the time-slot depth and the resource overhead grow significantly with the edge density of the target graph. This paper proposes a new method. Exploiting the mathematical equivalence between joint Pauli-$X$ measurements and graph pivot operations, we formulate graph state synthesis as a rank-$2$ reduction process of a difference matrix over $\mathrm{GF}(2)$, and give an upper bound $\lfloor N/2\rfloor$ on the number of steps for synthesizing an arbitrary $N$-node graph state, independent of the edge density of the target graph state.
At the physical level, the joint Pauli-$X$ measurement of each step is mapped to a dual-star concurrent distribution.
We model the protocol on Waxman physical topologies with fiber attenuation and give a heuristic algorithm, and evaluate it against a strengthened Steiner baseline through Monte Carlo experiments.
The experimental results show that our protocol is superior in time-slot depth almost everywhere. The entanglement resource overhead, the total number of CZ gates, and the number of Pauli measurements drop below the baseline near edge density $p\approx0.3$, and are superior across the board thereafter. The denser the target graph state, the more significant the advantage.
\end{abstract}

\section{Introduction}
Graph states are a class of multipartite entangled states with a clear structure. Their entanglement structure can be represented by an undirected graph, and the corresponding evolution rules have been systematically studied~\cite{refgraph0,refgraph1,refgraph2,refgraph3}.
Synthesizing a specific graph state in a distributed network is mainly constrained by probabilistic entanglement generation and the limited coherence time of qubits. Established entanglement links must be used as soon as possible; the longer they wait, the larger the fidelity loss caused by decoherence~\cite{refhuang,refscalable}.
Therefore, the number of serial time slots is an important optimization objective for graph state synthesis schemes. Meanwhile, the fusion of multipartite entangled states inevitably introduces quantum operations such as CZ gates and Pauli measurements, which also affect the final fidelity of the graph state.

In this research area, Meignant et al.~\cite{refMMG} proposed a general scheme. The idea is to solve a Steiner tree on the physical links, synthesize star subgraphs along the Steiner tree via specific local Clifford operations, and then complete the distribution of an arbitrary graph state by stitching the star subgraphs together. When the target graph state is unknown, their approach first distributes an edge-decorated complete graph, and the nodes then project out the target state via local measurements. When the target graph state is known, one can extract the largest star subgraphs from the target graph, distribute them one by one, and stitch them together.
The Steiner tree strategy of the Meignant protocol~\cite{refMMG} reuses entanglement resources efficiently and is a clever application of classical graph theory in quantum networks.
However, in their idealized model, Meignant et al. assume lossless Clifford operations, instantaneous classical communication with zero delay, and simultaneous preparation of neighboring Bell pairs. In terms of time accounting, since the local Clifford operations required by the protocol commute with each other, the authors consider that a star graph state can be completed concurrently within one logical step. Consequently, the number of steps for synthesizing an arbitrary graph state is at most $N-1$.
This idealized assumption implicitly presumes that, within one time step, a single node can apply quantum operations in parallel to multiple internal qubits, i.e., quantum nodes possess strong parallel processing capability. This is difficult to realize, because too many parallel operations induce crosstalk between qubits.

Since then, researchers have carried out extensive work on distributing graph states in quantum networks~\cite{refgu2024fendi,refgao2025efficient,refliu2026swapping,refFan,refevan,refhuang,refckl,refscalable,reffischer,refSCsen}, along three main lines. The first optimizes the network layer. Fan et al.~\cite{refFan} studied maximizing the distribution rate under resource, decoherence, and fidelity constraints using hypergraph linear programming. Sutcliffe et al.~\cite{refevan} designed a fidelity-aware multi-path routing protocol. Huang et al.~\cite{refhuang} proposed the ST-P2PGSD protocol, which jointly optimizes resource allocation and path selection globally and provides a time accounting framework closer to physical reality.
The second line reduces the cost of the underlying primitive: the star graph state distribution schemes proposed in Refs.~\cite{refckl,refscalable} consume fewer quantum operations than Ref.~\cite{refMMG}. 
The third line moves the synthesis to a single node. The GST scheme proposed by Fischer et al.~\cite{reffischer} first prepares the target graph state at a single node and then teleports it to the target nodes; although efficient, it is essentially centralized distribution. The subgraph-complementation-based scheme of Sen et al.~\cite{refSCsen} outperforms Ref.~\cite{reffischer} in resource efficiency and introduces a gate-by-gate timing framework closer to physical reality. However, this protocol also relies on a central node, at which up to $N-1$ qubits must be aggregated, sharing the centralized architecture of the GST scheme, thereby creating a quantum resource bottleneck and limiting its scalability, particularly in current quantum architectures with limited local resources.

The above works have advanced multipartite entanglement distribution protocols along different dimensions such as network scheduling, algebraic synthesis, and low-level design. Most topology-aware distributed schemes, however, synthesize the target entanglement through edge- or star-based distribution primitives.
As the edge density $p$ increases, both the number of synthesis steps and the cumulative number of quantum gates inevitably grow, which limits their practicality for synthesizing dense graph states.

To address this problem, this paper revisits graph state synthesis in quantum networks from an algebraic perspective, mapping it to pivot transformations on nodes of a distributed quantum network.
Specifically, we represent a graph state by its adjacency matrix over $\mathrm{GF}(2)$, and a joint Pauli-$X$ measurement is interpreted as one symmetric rank-$2$ update of this matrix.
At the concrete physical level, each rank-$2$ update is mapped to the concurrent distribution of three entanglement branches: the backbone entanglement between the ancilla pair $S_A$ and $S_B$, and the star-type entanglement distribution from $S_A$ and from $S_B$ to the target nodes.
We further prove that synthesizing an arbitrary graph state is equivalent to performing a sequence of rank-$2$ updates on the difference matrix until it vanishes, and that the number of steps for synthesizing an arbitrary $N$-node graph state is upper bounded by $\lfloor N/2\rfloor$; this upper bound is determined by the rank of the adjacency matrix of the target graph state, as stated in Theorem~\ref{thm:complexity}.

This algebraic viewpoint is consistent with a body of recent results. 
Burchardt et al. ~\cite{refburchardt2025} introduced foliage partitions as efficiently computable invariants of LC equivalence classes of graph states,
Sharma et al.~\cite{refsharma2026} used integer linear programming to search within an equivalence class for the equivalent graph state with the fewest edges, 
and L\"obl et al.~\cite{refmatthias2025} derived graph-transformation rules for Bell-state measurements on graph states.
These results show that there is a clear algebraic correspondence between the topological evolution of graph states and quantum operations. Oum~\cite{refoum} developed the vertex-minor and rank-width framework in which graph pivoting plays an important role, and pivoting remains an active tool in structural graph theory ~\cite{refoum2025}. Mhalla and Perdrix~\cite{refmhalla} proved the physical equivalence between joint Pauli-$X$ measurements and pivot operations, and Van den Nest et al.~\cite{refmaarten} characterized the graph transformations induced by local Clifford operations on graph states.

It should be noted that Huang et al.~\cite{refhuang} proved that, when node storage is unlimited, the number of time slots required for a one-shot distribution of a given graph state has a tight upper bound $\lfloor|V_S|/2\rfloor$, numerically similar to our $\lfloor N/2\rfloor$; however, the two count different objects. The former counts the number of rounds of parallel routing, whereas ours counts the number of rank-$2$ operations with the joint Pauli-$X$ measurement as the primitive. Their meanings are not the same.
In mathematical spirit, our work is closest to the subgraph complementation approach of Sen et al.~\cite{refSCsen}; both originate from algebraic transformations over $\mathrm{GF}(2)$. 
The former is based on subgraph-complementation systems closely related to the minimum-rank representation of the graph, whereas our pivot transformation is expressed directly as symmetric rank-$2$ updates of the adjacency matrix.
Consequently, for random graphs the subgraph complementation approach requires a number of steps close to $N$, whereas our step count is $\operatorname{rank}(M_{\mathrm{target}})/2$; since the standard rank is close to $N$ for random graphs, this is about $N/2$, only half as many. More importantly, the protocol of Sen et al. is essentially a centralized protocol, requiring up to $N-1$ qubits to be aggregated at a central node, and it does not address a physical implementation under a distributed execution architecture.

To evaluate the performance of our protocol, we do not adopt an idealized time accounting model; instead, we evaluate the protocol and the baseline under a time-slot model close to physical conditions. The time-slot model follows the framework given in Refs.~\cite{refshiqian,refhuang}. That is, time is discretized into slots of fixed length, classical information can traverse the entire network within one slot, and neighboring nodes attempt to establish Bell pairs according to the channel capacity.
On this basis, we add a single-interface constraint: each node can participate in the establishment of only one entanglement link per time slot, which avoids crosstalk from parallel operations within a node.
Consequently, physical entanglement links sharing the same quantum node must be staggered into different slots and executed serially. For the numerical comparison, the maximum node degree in each step is used as a degree-based slot-depth measure, and this measure is summed over all steps.
In this timing framework, classical communication is not a bottleneck: measurement outcomes can be sent point-to-point directly, and Pauli byproducts are recorded in a Pauli frame and corrected at the end.
As the comparison baseline, we take Ref.~\cite{refMMG} as the skeleton and strengthen it where necessary: the star subgraph generated in each step adopts the scheme with lower gate cost from Refs.~\cite{refckl,refscalable}, and the entanglement establishment follows its Steiner tree idea. Below we refer to this as the Steiner baseline.
We map the proposed method onto a concrete network model, give a heuristic algorithm that handles pivot ordering and ancilla placement separately, and build the physical network model on Waxman topologies with attenuation. We validate our protocol and the Steiner baseline via Monte Carlo numerical experiments.
The experimental results show that, since the number of serial steps of our protocol is fixed and small, it is superior in the degree-based time-slot metric almost everywhere. The entanglement resource overhead, the total number of CZ gates, and the number of Pauli measurements drop below the baseline near edge density $p\approx0.3$, and are superior across the board thereafter. The denser the target graph state, the more significant the advantage.
The paper is organized as follows. Section~\ref{sec:2} reviews the algebraic representation of graph states, Sec.~\ref{sec:3} establishes the rank-$2$ synthesis framework, Sec.~\ref{sec:network.model} completes the physical mapping and gives the heuristic algorithm, Sec.~\ref{sec:numerical} reports the numerical results and Sec.~\ref{sec:conclusion} concludes.

\section{Preliminaries}
\label{sec:2}
This section first reviews the definition of graph states and their adjacency matrix representation over $\mathrm{GF}(2)$, and introduces the basic operations and evolution rules of graph states~\cite{refgraph0,refgraph1,refgraph2,refgraph3}. On this basis, we clarify the equivalence between joint Pauli-$X$ measurements and the pivot transformation in graph theory.

\subsection{Graph states and their GF(2) algebraic representation}

Table~\ref{tab:notation} collects the notation used throughout the paper.

\begin{table}[htbp]
\centering
\footnotesize
\setlength{\tabcolsep}{4pt}
\renewcommand{\arraystretch}{1.05}
\begin{tabular}{@{}ll@{}}
\hline
Symbol & Meaning \\
\hline
$N$ & number of target nodes \\
$M$ & GF(2) adjacency matrix of a graph state \\
$R$ & difference matrix $M_{\mathrm{target}}\oplus M_{\mathrm{curr}}$ \\
$e_i$,\ $Re_i$ & $i$-th standard basis vector; $i$-th column of $R$ \\
$u,v$ & ancilla connection vectors, $u=Re_j$, $v=Re_i$ \\
$S_A,S_B$ & entangled ancilla pair of one rank-2 update \\
$G\wedge uv$ & pivot of $G$ along the edge $uv$ \\
$\oplus$ & bitwise XOR, i.e.\ addition over GF(2) \\
$C_e$ & resource cost of physical edge $e$ \\
$\Delta_{\mathrm{net}}$,\ $\Delta(G)$ & net edge progress; max node degree \\
\hline
\end{tabular}
\caption{Notation used throughout the paper.}
\label{tab:notation}
\end{table}

A graph state $|G\rangle$ is a class of multipartite entangled states whose entanglement structure corresponds one-to-one to a simple undirected graph $G=(V,E)$ without self-loops. To prepare it, each vertex is assigned a qubit initialized to $|+\rangle = (|0\rangle + |1\rangle)/\sqrt{2}$, and then a controlled-Z gate $\mathrm{CZ}_{a,b}$ is applied for each edge $(a,b)\in E$ to establish entanglement between the corresponding qubits. The entanglement structure of the resulting state is fully isomorphic to the topology of the graph $G$:
\begin{equation}
|G\rangle = \prod_{(a,b)\in E} \mathrm{CZ}_{a,b} |+\rangle^{\otimes |V|}.
\end{equation}

An $N$-qubit graph state directly corresponds to an adjacency matrix $M \in \mathrm{GF}(2)^{N \times N}$ over the finite field $\mathrm{GF}(2)$: if a logical edge exists between nodes $i$ and $j$, then $M_{i,j}=M_{j,i}=1$, and otherwise $0$. Since the graph has no self-loops, the main diagonal is identically zero, so the adjacency matrix is symmetric with an all-zero diagonal, which means that $M$ is an alternating matrix over $\mathrm{GF}(2)$. 
Over $\mathrm{GF}(2)$, the rank of an alternating matrix is always even, so the initial rank $\operatorname{rank}(M_{\mathrm{target}})$ must be even. Unless stated otherwise, all matrix arithmetic in this paper is carried out over $\mathrm{GF}(2)$, so the reduction modulo $2$ is left implicit below.

\subsection{Basic quantum operations and evolution rules of graph states}

Local complementation (LC) and single-qubit Pauli measurements are two classes of basic operations on graph states. LC changes the adjacency relations between the nodes of a graph state, but the resulting graph state remains local-Clifford equivalent to the original one; Pauli measurements change the topology of the graph state and are usually accompanied by byproducts, which require classical feedback corrections.

\begin{itemize}
\item \textbf{Local complementation (LC).} Applying $\mathrm{LC}_i(G)$ to vertex $i$ of a graph state $G=(V,E)$ is equivalent to taking the symmetric difference $G \mapsto (V, E \triangle \mathcal{E}_i)$ on all edges within its neighborhood $\mathcal{N}(i)$, where $\mathcal{E}_i = \{ (u,v) \mid u,v \in \mathcal{N}(i), u \neq v \}$. Physically, it corresponds to the following local Clifford unitary,
\begin{equation}
U_{\mathrm{LC}_i} = R_x^{(i)}(\pi/2) \bigotimes_{j \in \mathcal{N}(i)} R_z^{(j)}(-\pi/2).
\end{equation}
\item \textbf{Pauli measurement rules.} A single-qubit Pauli measurement on node $i$ induces a global evolution of the graph state. Let $G_{-i}$ denote the residual graph after removing $i$. Depending on the chosen measurement basis, after local Clifford corrections the remaining system evolves into a new graph state,

\begin{equation}
\label{eq:pauli_measurements}
\begin{aligned}
M_z^{(i)} |G\rangle &= \tfrac{1}{\sqrt{2}} \sum_{k=\pm 1} |k\rangle_i \otimes U_k^{(i)} |G_{-i}\rangle, \\
M_y^{(i) } |G\rangle &= \tfrac{1}{\sqrt{2}} \sum_{l=\pm 1} |l\rangle_i \otimes U_l^{(i)} |\mathrm{LC}_i(G)_{-i}\rangle, \\
M_x^{(i)} |G\rangle &= \tfrac{1}{\sqrt{2}} \sum_{m=\pm 1} |m\rangle_i \\
&\qquad \otimes U_m^{(i,j)} |\mathrm{LC}_j(\mathrm{LC}_i(\mathrm{LC}_j(G))_{-i})\rangle.
\end{aligned}
\end{equation}

Here $|k\rangle_i, |l\rangle_i, |m\rangle_i$ are the measurement eigenstates of the corresponding Pauli bases, $j\in\mathcal{N}(i)$ is an arbitrarily chosen neighbour of $i$, and $U_k^{(i)},U_l^{(i)},U_m^{(i,j)}$ are outcome-dependent byproduct operators. If they commute with subsequent quantum operations, they can be recorded in a Pauli frame and corrected afterwards.
\end{itemize}

\subsection{Joint Pauli-X measurement and the pivot transformation}

According to the Pauli-$X$ rule in Eq.~\eqref{eq:pauli_measurements}, the residual graph produced by a single-qubit Pauli-$X$ measurement depends on the choice of the reference vertex, and the graphs corresponding to different reference vertices are LC-equivalent to each other. Mhalla and Perdrix~\cite{refmhalla} proved that performing Pauli-$X$ measurements simultaneously on adjacent vertices $u,v$ (hereafter called a joint Pauli-$X$ measurement) is equivalent to first performing the pivot operation $G \wedge uv$ on the original graph and then deleting the two measured vertices:
\begin{equation}
\label{eq:joint_X_measurement}
\langle s_u^{(\pi/2)} r_v^{(\pi/2)} | G \rangle = \tfrac{1}{2} Z_S | G \wedge uv \setminus \{u,v\} \rangle,
\end{equation}
where $s,r \in \{0,1\}$ are the measurement outcomes, which determine the byproduct correction set $S$:
\begin{equation}
S = (\mathcal{N}(u) \cap \mathcal{N}(v)) \triangle (\mathcal{N}(u) \setminus \{v\})^r \triangle (\mathcal{N}(v) \setminus \{u\})^s.
\end{equation}
Since Pauli-$Z$ byproducts do not change the edge set of the graph, once these local phases are compensated by classical feedforward, the effect of the joint Pauli-$X$ measurement on the graph topology is uniquely determined. This conclusion can also be derived from the Pauli-$X$ measurement rule in Eq.~\eqref{eq:pauli_measurements}. When $u,v$ are adjacent, first choose a neighbor $w$ of $u$ as the reference vertex, giving $X_u \equiv Z_u\, \mathrm{LC}_w \circ \mathrm{LC}_u \circ \mathrm{LC}_w \equiv Z_u (G \wedge wu)$; after the measurement, $w$ necessarily becomes a neighbor of $v$~\cite{refgraph3,refshen}, so taking $w$ as the reference again gives $X_v \equiv Z_v (G \wedge wv)$. Hence:
\begin{equation}
X_u X_v \equiv Z_u Z_v\, (G \wedge wu) \wedge wv.
\end{equation}
By Proposition 2.5 of Oum~\cite{refoum}, consecutive pivots satisfy $G \wedge wu \wedge wv = G \wedge uv$, and substituting this yields
\begin{equation}
X_u X_v \equiv Z_u Z_v\, (G \wedge uv).
\end{equation}
Therefore, the conclusion of Eq.~\eqref{eq:joint_X_measurement} is fully consistent with ``two single-qubit Pauli-$X$ measurements taking the same reference vertex''. The joint Pauli-$X$ measurement is equivalent to the graph pivot transformation at the topological level. In the next section, we rewrite this rule as an update of the $\mathrm{GF}(2)$ adjacency matrix, thereby giving the synthesis problem an algebraic formulation.

\section{Algebraic framework for rank-2 graph state synthesis}
\label{sec:3}
The previous section clarified the equivalence rule of the joint Pauli-$X$ measurement at the topological level. This section establishes its algebraic formulation: the graph state is represented by its $\mathrm{GF}(2)$ adjacency matrix, the effect of the joint measurement is written as one matrix update, and graph state synthesis is thereby reduced to a step-by-step rank reduction of a difference matrix. In this formulation, the number of synthesis steps is directly determined by the rank of the matrix.

\subsection{Algebraic update rule of graph state evolution}

According to Theorem 3.1 of Kim and Oum~\cite{kim2024vertex} and Tucker's matrix pivoting theory~\cite{reftucker1963,refPPT2000}, the pivot operation $G \wedge S_A S_B$ on a graph is algebraically equivalent to one principal pivot elimination of the binary adjacency matrix. We now expand this process and give its explicit form.

Let $M$ be the adjacency matrix of the entire undirected graph. Placing the adjacent node pair $S_A,S_B$ to be measured in the first two positions, $M$ has the block form
\begin{equation}
M =
\begin{pmatrix}
\alpha & \beta \\
\gamma & \delta
\end{pmatrix} ,
\end{equation}
where $\alpha$ is the $2 \times 2$ principal adjacency submatrix between $S_A$ and $S_B$. Since the graph has no self-loops and $S_A,S_B$ are adjacent, $\alpha = \left(\begin{smallmatrix} 0 & 1 \\ 1 & 0 \end{smallmatrix}\right)$, which is invertible and satisfies $\alpha^{-1}=\alpha$. The blocks $\beta = \big(\begin{smallmatrix} u^{\mathrm{T}} \\ v^{\mathrm{T}} \end{smallmatrix}\big)$ and $\gamma = \beta^{\mathrm{T}} = (u\ v)$ record the connections between $S_A,S_B$ and the remaining nodes; the column vectors $u,v$ describe the edges from the remaining nodes to $S_A$ and to $S_B$, respectively. The block $\delta$ is the principal submatrix remaining after deleting $S_A,S_B$.

The principal pivot elimination rule gives the update of the residual submatrix $\delta' = \delta - \gamma \alpha^{-1} \beta$. Over $\mathrm{GF}(2)$, addition and subtraction coincide and are both bitwise XOR $\oplus$; expanding the matrix product gives
\begin{equation}
\gamma \alpha^{-1} \beta = (u\ v) \begin{pmatrix} 0 & 1 \\ 1 & 0 \end{pmatrix} \begin{pmatrix} u^{\mathrm{T}} \\ v^{\mathrm{T}} \end{pmatrix} = u v^{\mathrm{T}} + v u^{\mathrm{T}}  .
\end{equation}
This yields the algebraic update rule of the joint Pauli-$X$ measurement.

\begin{proposition}[Rank-2 update rule of the joint measurement]
\label{prop:rank2}
Let $M$ be the adjacency matrix of a graph state over $\mathrm{GF}(2)$. After performing a joint Pauli-$X$ measurement on adjacent nodes $(S_A,S_B)$ and compensating the corresponding local Clifford corrections, the adjacency matrix of the remaining subsystem is updated to
\begin{equation}
\label{eq:pivoting_matrix}
M' = M_{\setminus \{S_A,S_B\}} \oplus (u v^{\mathrm{T}} + v u^{\mathrm{T}}) .
\end{equation}
\end{proposition}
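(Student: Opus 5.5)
The plan is to chain two facts already recorded in Sec.~\ref{sec:2}. First, up to Pauli-$Z$ byproducts, the joint Pauli-$X$ measurement on the adjacent pair $(S_A,S_B)$ yields the graph state of $(G\wedge S_AS_B)\setminus\{S_A,S_B\}$, by Eq.~\eqref{eq:joint_X_measurement}. Second, the remaining task is purely combinatorial: compute the adjacency matrix of $(G\wedge S_AS_B)\setminus\{S_A,S_B\}$ and show that it equals $\delta\oplus(uv^{\mathrm T}+vu^{\mathrm T})$, where $\delta=M_{\setminus\{S_A,S_B\}}$. The corrections $Z_S$ are diagonal Pauli operators, so they leave the edge set unchanged. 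Once they are compensated, the state is exactly the graph state of the deleted pivot.

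For the combinatorial step I would use the explicit description of the pivot rather than invoking Tucker's theorem as a black box. For $G\wedge ab$ with $a\sim b$, partition the vertices other than $a,b$ into three classes: $V_1=\mathcal N(a)\setminus(\mathcal N(b)\cup\{b\})$, $V_2=\mathcal N(b)\setminus(\mathcal N(a)\cup\{a\})$, and $V_3=\mathcal N(a)\cap\mathcal N(b)$. The pivot toggles every edge between two distinct classes, and the end vertices $a,b$ are then swapped. I can take this description from Oum~\cite{refoum}. It is also the same as the composition $\mathrm{LC}_a\mathrm{LC}_b\mathrm{LC}_a$ already used in Sec.~\ref{sec:2}, so if needed I would derive it from three applications of the local-complementation rule.

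Next I compare this with the rank-two matrix. For $x\neq y$ outside $\{a,b\}$, the entry $(uv^{\mathrm T}+vu^{\mathrm T})_{xy}$ equals $u_xv_y+v_xu_y$. Here $u_x=1$ exactly when $x\in V_1\cup V_3$, and $v_x=1$ exactly when $x\in V_2\cup V_3$. I will tabulate the nine class combinations. When $x,y$ lie in the same class, the value is $0$: in $V_1$ or $V_2$ both terms vanish, and in $V_3$ it is $1+1=0$. When they lie in different classes, the value is $1$; for example, $V_1\times V_2$ gives $1\cdot1+0\cdot0$, and $V_1\times V_3$ gives $1\cdot1+0\cdot1$. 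The same holds when either vertex has no neighbour in $\{a,b\}$, because then both terms vanish. The diagonal entries are $2u_xv_x=0$, so the update stays alternating. This matches the pivot rule exactly, and deleting $a,b$ removes the swap issue entirely.

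The main obstacle is sign and convention bookkeeping. I need to confirm that the block identity $\delta'=\delta-\gamma\alpha^{-1}\beta$ really describes the graph pivot and not the full principal pivot transform. The paper only asserts this via Kim and Oum~\cite{kim2024vertex}; the case check above verifies it directly, which is why I would include it. I also need to make sure that the "local Clifford corrections" needed in the statement are only the $Z_S$ byproducts of Eq.~\eqref{eq:joint_X_measurement}, so that no further local complementation alters the residual graph.
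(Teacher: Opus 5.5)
Your proof is correct, but it reaches the combinatorial identity by a different route than the paper. The quantum step is the same in both: Eq.~\eqref{eq:joint_X_measurement}, with only $Z_S$ byproducts, which do not change the edge set. For the combinatorial step, the paper never checks the pivot entry by entry. It cites Tucker's principal pivoting and Theorem~3.1 of Kim and Oum to identify $G\wedge S_AS_B$ with a principal pivot elimination of the adjacency matrix. It then writes $M$ in block form with $\alpha=\left(\begin{smallmatrix}0&1\\1&0\end{smallmatrix}\right)$ and reads off the surviving block as the Schur complement $\delta-\gamma\alpha^{-1}\beta=\delta\oplus(uv^{\mathrm{T}}+vu^{\mathrm{T}})$.

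The paper's route is shorter. The rank-$2$ outer-product form falls out of $\alpha^{-1}=\alpha$, in the same linear-algebraic language used later for the rank arguments. However, it rests entirely on the cited correspondence between graph pivots and principal pivots. Your nine-case table, built from Oum's three-class description, is self-contained and in effect proves that correspondence for the case needed. It also shows the toggled set explicitly: all pairs lying in distinct classes among $V_1,V_2,V_3$. This is the concrete picture behind the paper's target and redundant edges.

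Your remaining worry about the Schur complement versus the full principal pivot transform is moot. The lower-right block of the principal pivot transform is exactly $\delta-\gamma\alpha^{-1}\beta$. The other blocks are $\alpha^{-1}$ and $\alpha^{-1}\beta$; the latter just swaps the rows $u^{\mathrm{T}},v^{\mathrm{T}}$, which is your swap of $a$ and $b$. Both involve only the two measured vertices, which are deleted.
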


Proposition~\ref{prop:rank2} shows that one joint measurement is algebraically equivalent to XOR-ing the outer-product term $uv^{\mathrm{T}} + vu^{\mathrm{T}}$ with the adjacency matrix of the original graph. The rank of this outer-product term is at most $2$, and in our construction it is exactly $2$ (see the proof of Theorem~\ref{thm:complexity}); hence we call it a rank-$2$ update. Physically, it corresponds to the concurrent flipping of those logical edges determined by the vectors $u$ and $v$. 
Symmetric rank-$2$ GF(2) perturbations of this form also serve as a combinatorial tool in the study of vertex-minors~\cite{refoum2025}.
Related GF(2) rank-based decompositions have also been studied via odd covers~\cite{refbuchanan2023odd}, whereas here the rank-$2$ update $uv^{T}+vu^{T}$ is realized physically through a joint Pauli-$X$ measurement.

\subsection{Difference matrix}

\begin{definition}[Difference matrix]
Define the difference matrix as the difference between the target adjacency matrix $M_{\mathrm{target}}$ and the current adjacency matrix $M_{\mathrm{curr}}$:
\begin{equation}
R = M_{\mathrm{target}} \oplus M_{\mathrm{curr}}.
\end{equation}
\end{definition}

Suppose the target network has $N$ nodes. Introduce a pair of entangled ancilla qubits $S_A,S_B$, and denote their connection states to the original $N$ nodes by the column vectors $u,v\in\mathrm{GF}(2)^N$. Placing the ancilla nodes in the first two positions of the matrix, the extended global system (with $N+2$ nodes in total) has the block form:
\begin{equation}
M_{\mathrm{total}} =
\begin{pmatrix}
0 & 1 & u^{\mathrm{T}} \\
1 & 0 & v^{\mathrm{T}} \\
u & v & M_{\mathrm{curr}}
\end{pmatrix} .
\end{equation}
The upper-left $\left(\begin{smallmatrix}0&1\\1&0\end{smallmatrix}\right)$ block indicates that an entanglement edge already exists between $S_A$ and $S_B$, which is the prerequisite for the joint measurement. By Proposition~\ref{prop:rank2}, after performing a joint Pauli-$X$ measurement on $(S_A,S_B)$, the topology of the target network evolves as:
\begin{equation}
M_{\mathrm{curr}}' = M_{\mathrm{curr}} \oplus (u v^{\mathrm{T}} + v u^{\mathrm{T}}).
\end{equation}

Initially, the network has no entanglement, $M_{\mathrm{curr}} = \mathbf{0}$, hence $R^{(0)} = M_{\mathrm{target}}$. Our goal is to construct suitable vectors $(u,v)$ so that $M_{\mathrm{curr}}$ approaches $M_{\mathrm{target}}$ step by step, i.e., $R \to \mathbf{0}$. Graph state synthesis thus reduces to the problem of nullifying the difference matrix $R$ over $\mathrm{GF}(2)$ step by step.

\subsection{Rank-2 nullification theorem and complexity upper bound}

The key to the nullification lies in the construction of the ancilla connection vectors $u,v$. Theorem~\ref{thm:nullification} gives a construction of the vectors $u,v$ such that one joint Pauli-$X$ measurement simultaneously nullifies two rows and two columns of $R$.

\begin{theorem}[Rank-2 nullification]
\label{thm:nullification}
Suppose the current difference matrix $R$ has a nonzero element $R_{i,j}=1$. If we set $u = R e_j$ and $v = R e_i$, then after performing the joint Pauli-$X$ measurement on this ancilla pair, the updated difference matrix $R'$ satisfies
\[
R' e_i = \mathbf{0}, \quad R' e_j = \mathbf{0}.
\]
\end{theorem}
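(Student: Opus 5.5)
The plan is to translate the effect of one joint measurement on $M_{\mathrm{curr}}$ into its effect on $R$, and then evaluate the updated matrix directly on $e_i$ and $e_j$. The computation rests on two structural facts: $R$ is alternating and symmetric, and $R_{i,j}=1$.

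By Proposition~\ref{prop:rank2}, applied to the extended block form $M_{\mathrm{total}}$, the joint Pauli-$X$ measurement on $(S_A,S_B)$ gives $M_{\mathrm{curr}}'=M_{\mathrm{curr}}\oplus(uv^{\mathrm{T}}+vu^{\mathrm{T}})$. Since $M_{\mathrm{target}}$ is fixed, the update of the difference matrix is
\[
R' = M_{\mathrm{target}}\oplus M_{\mathrm{curr}}' = R\oplus(uv^{\mathrm{T}}+vu^{\mathrm{T}}).
\]
Multiplying on the right by a basis vector $e_k$ gives
\[
R'e_k = Re_k + u\,(v^{\mathrm{T}}e_k) + v\,(u^{\mathrm{T}}e_k).
\]
With $u=Re_j$ and $v=Re_i$, the scalar coefficients are entries of $R$:
\[
v^{\mathrm{T}}e_k=e_i^{\mathrm{T}}R^{\mathrm{T}}e_k=R_{k,i}, \qquad u^{\mathrm{T}}e_k=R_{k,j}.
\]
Here I use $R^{\mathrm{T}}=R$.

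This symmetry holds because $M_{\mathrm{target}}$ and $M_{\mathrm{curr}}$ are both symmetric with zero diagonal. Hence $R$ is alternating over $\mathrm{GF}(2)$: $R^{\mathrm{T}}=R$ and $R_{k,k}=0$. I would note in passing that this property is preserved by every update, so it holds at each step of the procedure.

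Now I specialise. For $k=i$: $v^{\mathrm{T}}e_i=R_{i,i}=0$ and $u^{\mathrm{T}}e_i=R_{i,j}=1$. Therefore
\[
R'e_i=Re_i+v=Re_i+Re_i=\mathbf{0}.
\]
For $k=j$: $v^{\mathrm{T}}e_j=R_{j,i}=R_{i,j}=1$ and $u^{\mathrm{T}}e_j=R_{j,j}=0$. Therefore
\[
R'e_j=Re_j+u=\mathbf{0}.
\]
By symmetry of $R'$, the $i$-th and $j$-th rows vanish as well. This gives the claimed nullification of two rows and two columns.

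The algebra itself is immediate. The step needing care is the justification that Proposition~\ref{prop:rank2} applies to the extended system. Three things must hold: the ancillas are adjacent to each other; their neighbourhoods among the $N$ target nodes are exactly the supports of $u$ and $v$; and the local Clifford/Pauli-$Z$ byproducts are absorbed so that only the topological update remains. I would state these as the standing hypotheses of the construction and not prove them here, since they concern physical realisability rather than the GF(2) identity.

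I would also remark that the diagonal of $uv^{\mathrm{T}}+vu^{\mathrm{T}}$ is $2u_kv_k=0$. So $R'$ stays alternating, which is what later allows the argument to be iterated in Theorem~\ref{thm:complexity}.
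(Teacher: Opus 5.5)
Your proposal is correct and follows the paper's proof essentially verbatim. You derive $R' = R\oplus(uv^{\mathrm{T}}+vu^{\mathrm{T}})$ from Proposition~\ref{prop:rank2}, multiply on the right by $e_i$ (resp.\ $e_j$), and use $R_{i,i}=R_{j,j}=0$ and $R_{i,j}=R_{j,i}=1$ to obtain $R'e_i=Re_i\oplus v=\mathbf{0}$ and $R'e_j=Re_j\oplus u=\mathbf{0}$. Your side remarks (vanishing rows by symmetry of $R'$, and preservation of the alternating property) are exactly what the paper uses in the proof of Theorem~\ref{thm:complexity}.
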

\begin{proof}
By Proposition~\ref{prop:rank2} and the definition of the difference matrix, the evolution rule is $R' = R \oplus (u v^{\mathrm{T}} + v u^{\mathrm{T}}) $. Multiplying by the basis vector $e_i$ on the right:
\[
R' e_i = R e_i \oplus u (v^{\mathrm{T}} e_i) \oplus v (u^{\mathrm{T}} e_i).
\]
From $v = R e_i$, we have $v^{\mathrm{T}} e_i = e_i^{\mathrm{T}} R^{\mathrm{T}} e_i = R_{i,i}$. The diagonal of the difference matrix is identically zero, so $R_{i,i}=0$ and this term vanishes. From $u = R e_j$, we have $u^{\mathrm{T}} e_i = e_i^{\mathrm{T}} R^{\mathrm{T}} e_j = R_{j,i}$, and by symmetry $R_{j,i}=R_{i,j}=1$, so
\[
R' e_i = R e_i \oplus \mathbf{0} \oplus v \cdot 1 = R e_i \oplus v = v \oplus v = \mathbf{0}.
\]
The claim $R' e_j = \mathbf{0}$ follows in the same way.
\end{proof}

Theorem~\ref{thm:nullification} states that, as long as $u$ and $v$ are chosen according to the difference matrix, one joint measurement nullifies two rows and two columns of $R$. Iterating this procedure gradually restores the target graph state. Theorem~\ref{thm:complexity} further gives the theoretical upper bound on the number of steps required by this iterative process.

\begin{figure*}[t]
    \centering
    \includegraphics[width=0.9\textwidth]{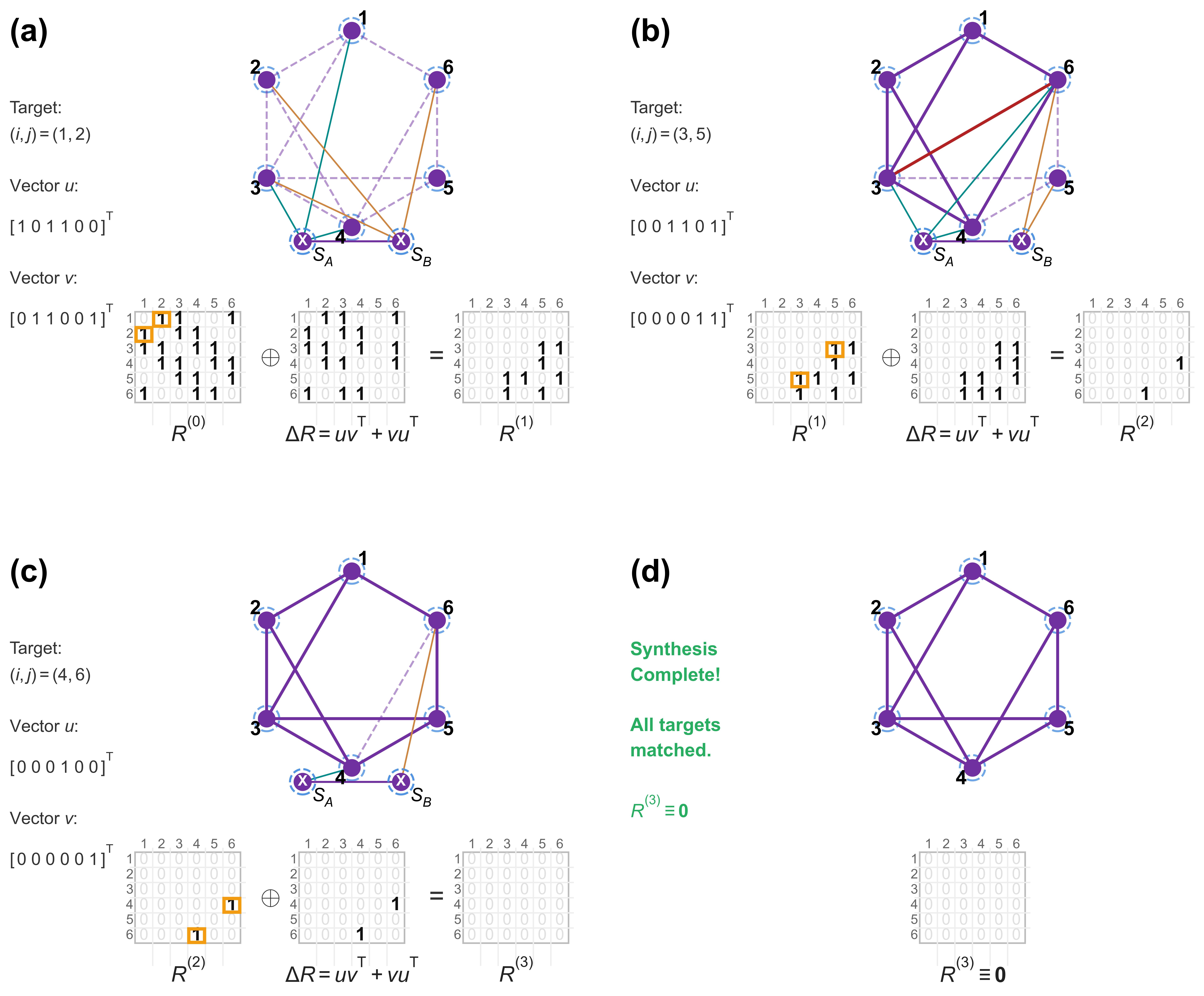}
    \caption{Step-by-step graph state synthesis based on rank-2 updates. In subfigures (a) to (c), the upper part is the physical graph and the lower part is the corresponding GF(2) difference matrix $R^{(k)}$; the two evolve synchronously. In each step, a nonzero pivot (golden square) is selected in $R^{(k)}$, from which the vectors $u,v$ are extracted to configure the ancilla qubits $S_A,S_B$ and perform the joint Pauli-$X$ measurement. Algebraically, this measurement is the rank-2 update $\Delta R=uv^{\mathrm{T}}+vu^{\mathrm{T}}$; physically, it generates the target edges while concurrently producing the redundant edges marked in red. In subfigure (d), the final state satisfies $R^{(3)}\equiv\mathbf{0}$, and the purple synthesized graph coincides exactly with the target state.}
    \label{fig:main_evolution}
\end{figure*}

\begin{theorem}[Upper bound on the number of global operations]
\label{thm:complexity}
Suppose a system of $N$ nodes is initially in an unentangled state, and the target graph state corresponds to the adjacency matrix $M_{\mathrm{target}}$. By iterating the measurement operation of Theorem~\ref{thm:nullification}, the number of operations $K$ required to evolve the system to the target graph state is exactly
\[
K = \frac{\operatorname{rank}(M_{\mathrm{target}})}{2},
\]
and this number of operations satisfies the global upper limit $K \le \left\lfloor \frac{N}{2} \right\rfloor$.
\end{theorem}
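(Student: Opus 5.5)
The plan is to track the rank of the difference matrix along the iteration and show that each application of Theorem~\ref{thm:nullification} lowers it by exactly $2$. Since $R^{(0)}=M_{\mathrm{target}}$ and the process stops precisely when $R=\mathbf{0}$ (rank $0$), this gives $K=\operatorname{rank}(M_{\mathrm{target}})/2$. The evenness of this quantity is already guaranteed, because every $R^{(k)}$ is alternating: it is symmetric with zero diagonal, and the update $uv^{\mathrm T}+vu^{\mathrm T}$ preserves both properties, since its diagonal entries are $2u_kv_k=0$. Termination with exactly $K$ steps then follows once the per-step drop is exactly $2$.

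For the per-step drop, I would write $R'=R\oplus(uv^{\mathrm T}+vu^{\mathrm T})$ with $u=Re_j$ and $v=Re_i$. This equals $R\oplus R(e_je_i^{\mathrm T}+e_ie_j^{\mathrm T})R$, using the symmetry of $R$. The cleanest route is a Schur-complement identity. Consider the $2\times2$ principal submatrix $\alpha=R_{\{i,j\},\{i,j\}}=\left(\begin{smallmatrix}0&1\\1&0\end{smallmatrix}\right)$, which is invertible with $\alpha^{-1}=\alpha$. Then $R'$ is exactly $R-R_{\cdot,\{i,j\}}\,\alpha^{-1}R_{\{i,j\},\cdot}$, which is the same algebra as in Proposition~\ref{prop:rank2}. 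The standard block identity gives $\operatorname{rank}(R)=\operatorname{rank}(\alpha)+\operatorname{rank}(\text{Schur complement})$. Moreover, $R'$ is zero on rows and columns $i,j$ (Theorem~\ref{thm:nullification}), and its restriction to the remaining indices is the Schur complement. Hence $\operatorname{rank}(R')=\operatorname{rank}(R)-2$.

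As a cross-check, I would also verify the drop directly. The column space of $R'$ is contained in that of $R$, since $u,v\in\operatorname{col}(R)$. Further, $u$ and $v$ are linearly independent and nonzero, because $u_i=R_{i,j}=1$, $u_j=0$, $v_j=1$ and $v_i=0$. This also confirms the claim after Proposition~\ref{prop:rank2} that the update has rank exactly $2$. Finally, $\ker R'\supseteq\ker R+\operatorname{span}\{e_i,e_j\}$. Showing this sum is direct requires showing that no nonzero $ae_i+be_j$ lies in $\ker R$; this holds because $R(ae_i+be_j)=av+bu\neq0$. For the inclusion $\ker R\subseteq\ker R'$, note that if $Rx=0$ then $u^{\mathrm T}x=e_j^{\mathrm T}Rx=0$ and likewise $v^{\mathrm T}x=0$. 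This gives $\dim\ker R'\ge\dim\ker R+2$, i.e.\ a drop of at least $2$. Subadditivity of rank under a rank-$2$ perturbation gives a drop of at most $2$.

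The final bound $K\le\lfloor N/2\rfloor$ is immediate: $\operatorname{rank}(M_{\mathrm{target}})\le N$ and the rank is even, so $K=\operatorname{rank}/2\le\lfloor N/2\rfloor$. The main obstacle I anticipate is making the ``exactly'' rigorous, rather than just ``at most''. The kernel argument handles this cleanly, and the key point is checking that $\ker R\cap\operatorname{span}\{e_i,e_j\}=\{0\}$, which uses $R_{i,j}=1$ and the zero diagonal.
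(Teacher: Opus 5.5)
Your proof is correct, and your main argument differs from the paper's. The paper does not use a Schur-complement rank formula. It bounds the rank drop from above by subadditivity. It bounds it from below by noting $\operatorname{col}R'\subseteq\operatorname{col}R$ and that $\operatorname{span}\{u,v\}$ meets $\operatorname{col}R'$ only in $\mathbf{0}$: rows $i,j$ of $R'$ vanish, while each of $u$, $v$, $u\oplus v$ has a $1$ at coordinate $i$ or $j$. This gives $\operatorname{col}R'\oplus\operatorname{span}\{u,v\}\subseteq\operatorname{col}R$.

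You instead observe that the update is exactly the Schur complement of $R$ at its invertible principal block $R_{\{i,j\},\{i,j\}}$. This is the same computation the paper uses to derive Proposition~\ref{prop:rank2}, now applied to the difference matrix. Rank additivity of Schur complements holds over any field, after permuting $i,j$ to the front. So it gives $\operatorname{rank}R'=\operatorname{rank}R-2$ as a single identity rather than a pair of inequalities. It also shows that the iteration is just symmetric Gaussian elimination over $\mathrm{GF}(2)$.

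Your cross-check is the paper's lower-bound argument in dual form. The paper exhibits a direct sum inside $\operatorname{col}R$; you exhibit $\ker R\oplus\operatorname{span}\{e_i,e_j\}\subseteq\ker R'$. Both are then closed by subadditivity. The paper's version is more elementary, needing only subadditivity and a containment; yours is shorter and more structural.

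One further remark applies to either version. The exact drop of $2$ per step, ending at rank $0$, already shows that $\operatorname{rank}(M_{\mathrm{target}})$ is even. So the separate appeal to the even rank of alternating matrices is a convenience, not a necessity.
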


\begin{proof}
The system is initially unentangled, so $R^{(0)}=M_{\rm target}$. Let $R^{(k-1)}$ be the difference matrix before the $k$-th operation, and choose a nonzero element $R_{i,j}^{(k-1)} = 1$ for the operation.

By the update rule of Theorem~\ref{thm:nullification}, the difference matrices before and after the operation are related by
\[
R^{(k)} = R^{(k-1)} \oplus \bigl(u v^{\mathrm{T}} + v u^{\mathrm{T}}\bigr).
\]
The update term $u v^{\mathrm{T}} + v u^{\mathrm{T}}$ is symmetric with an identically zero diagonal (its $l$-th diagonal element is $2u_lv_l\equiv 0 \pmod 2$), so $R^{(k)}$ remains an alternating matrix over $\mathrm{GF}(2)$, and the premise $R_{ii}=0$ required by Theorem~\ref{thm:nullification} holds at every step.

Since $\oplus$ is self-inverse over $\mathrm{GF}(2)$, we also have $R^{(k-1)} = R^{(k)} \oplus \bigl(u v^{\mathrm{T}} + v u^{\mathrm{T}}\bigr)$. The rank of the perturbation term is at most $2$; by the subadditivity of matrix rank, $\mathrm{rank}(A\oplus B)\le \mathrm{rank}(A)+\mathrm{rank}(B)$, we obtain $\mathrm{rank}(R^{(k-1)}) \le \mathrm{rank}(R^{(k)}) + 2$, i.e., the rank drops by at most $2$ per step: $\mathrm{rank}(R^{(k)}) \ge \mathrm{rank}(R^{(k-1)}) - 2$.

On the other hand, by the update formula, every column of $R^{(k)}$ belongs to $\operatorname{span}\bigl(\operatorname{col}R^{(k-1)}\cup\{u,v\}\bigr)=\operatorname{col}R^{(k-1)}$, because $u=R^{(k-1)}e_j$ and $v=R^{(k-1)}e_i$ are themselves columns of $R^{(k-1)}$. Moreover, by Theorem~\ref{thm:nullification} and symmetry, the $i$-th and $j$-th rows of $R^{(k)}$ are all zero, so every vector in $\operatorname{col}R^{(k)}$ vanishes at coordinates $i$ and $j$. But $u_i=R^{(k-1)}_{i,j}=1$, $v_j=R^{(k-1)}_{i,j}=1$, $v_i=R^{(k-1)}_{i,i}=0$, so $u$, $v$, and $u\oplus v$ all take the value $1$ at coordinate $i$ or $j$, and none of them belongs to $\operatorname{col}R^{(k)}$; that is, $\{u,v\}$ remains linearly independent when combined with $\operatorname{col}R^{(k)}$. Therefore $\operatorname{col}R^{(k-1)}\supseteq\operatorname{col}R^{(k)}\oplus\operatorname{span}\{u,v\}$, and the rank drops by at least $2$: $\operatorname{rank}\bigl(R^{(k)}\bigr) \le \operatorname{rank}\bigl(R^{(k-1)}\bigr) - 2$.

Combining the upper and lower bounds, each step reduces $\mathrm{rank}(R)$ by exactly 2:
\[
\operatorname{rank}\bigl(R^{(k)}\bigr) = \operatorname{rank}\bigl(R^{(k-1)}\bigr) - 2.
\]

Over $\mathrm{GF}(2)$, the rank of an alternating matrix is always even, so the initial rank $\operatorname{rank}(M_{\mathrm{target}})$ must be even. The rank sequence starts from $\operatorname{rank}(M_{\mathrm{target}})$ and strictly decreases by 2 at each step until it reaches 0, completing the synthesis; hence the total number of operations required is exactly $K = \operatorname{rank}(M_{\mathrm{target}})/2$. Since $M_{\mathrm{target}}$ is an $N \times N$ matrix, its rank is at most $N$ and is even, so $K \le \lfloor N/2 \rfloor$.
\end{proof}

The step-count upper bound $\lfloor N/2\rfloor$ given by Theorem~\ref{thm:complexity} is independent of the edge density of the target graph and depends only on the standard rank of its adjacency matrix. Based on the above two theorems, we obtain the purely algebraic graph state synthesis Algorithm~\ref{alg:algebraic}. Figure~\ref{fig:main_evolution} demonstrates the matrix evolution of this algorithm on a 6-node target graph state.

\begin{algorithm}[htbp]
\AlgCaption{Algebraic graph state synthesis over GF(2)}
\label{alg:algebraic}
\begin{algorithmic}[1]
\REQUIRE Target graph state adjacency matrix $M_{\mathrm{target}}$, available ancilla node set $\mathcal{V}_{\mathrm{ancilla}}$
\ENSURE Sequence of algebraic operations $\mathcal{S}$ achieving the target graph state
\STATE $M_{\mathrm{curr}} \leftarrow \mathbf{0}$, $R \leftarrow M_{\mathrm{target}}$, $\mathcal{S} \leftarrow \emptyset$
\WHILE{$R \neq \mathbf{0}$}
    \STATE Choose any nonzero element $R_{i,j}=1$ of $R$
    \STATE $u \leftarrow R e_j$, $v \leftarrow R e_i$
    \STATE Take a pair of ancilla nodes $(S_A,S_B)$ sharing entanglement from $\mathcal{V}_{\mathrm{ancilla}}$
    \STATE $\mathcal{S} \leftarrow \mathcal{S} \cup \{(u,v,S_A,S_B)\}$
    \STATE $R \leftarrow R \oplus (u v^{\mathrm{T}} + v u^{\mathrm{T}})$
\ENDWHILE
\RETURN $\mathcal{S}$
\end{algorithmic}
\end{algorithm}

Each tuple $(u,v,S_A,S_B)$ fully specifies one synthesis operation: the ancilla pair $(S_A,S_B)$ is initialized in a Bell state, connected to the target nodes indicated by the supports of $u$ and $v$, and then jointly measured in the Pauli-$X$ basis. All constituent operations, namely CZ gates, Pauli measurements, and byproduct corrections, are Clifford operations; their physical realization is given in Sec.~\ref{subsec:network_model}.

At each step, Algorithm~\ref{alg:algebraic} extracts $u,v$ from the difference matrix so that the rank drops by $2$, and the number of synthesis steps does not exceed $\lfloor N/2\rfloor$, whereas the number of steps of the Steiner baseline equals the number of decomposed star subgraphs, which grows as the target graph becomes denser. Under the time-slot model, the physical cost of a single step is limited by the physical connectivity, and the two protocols show no essential difference in order, although the maximum degree of the dual-star routing graph is somewhat higher; therefore, the difference in cumulative depth mainly originates from the number of serial steps. This point will be discussed in Sec.~\ref{sec:concurrent} and numerically in Sec.~\ref{subsec:time_cost}.
Note that Algorithm~\ref{alg:algebraic} is a purely idealized algebraic algorithm that does not yet account for the spatial distances and channel losses in a real physical network. In a real quantum network, the choice of which edge to eliminate at each step and where to place the ancilla nodes $(S_A,S_B)$ can significantly affect the physical resources consumed for establishing the underlying connections. In the next section, we map the algebraic algorithm onto a real physical network and design a heuristic algorithm under this physical mapping with the goal of optimizing the resource overhead.

\section{Network model and physical mapping strategy}
\label{sec:network.model}

In this section, we first present the quantum network model and three basic evolution operations, define the physical implementation of one rank-$2$ update as a dual-star concurrent distribution, and compute its single-step depth under the time-slot model. On this basis, taking into account the channel attenuation faced by entanglement establishment in real networks, we design a heuristic strategy algorithm with the goal of minimizing the resource overhead.

\begin{figure*}[htbp]
    \centering
    \includegraphics[width=1\textwidth]{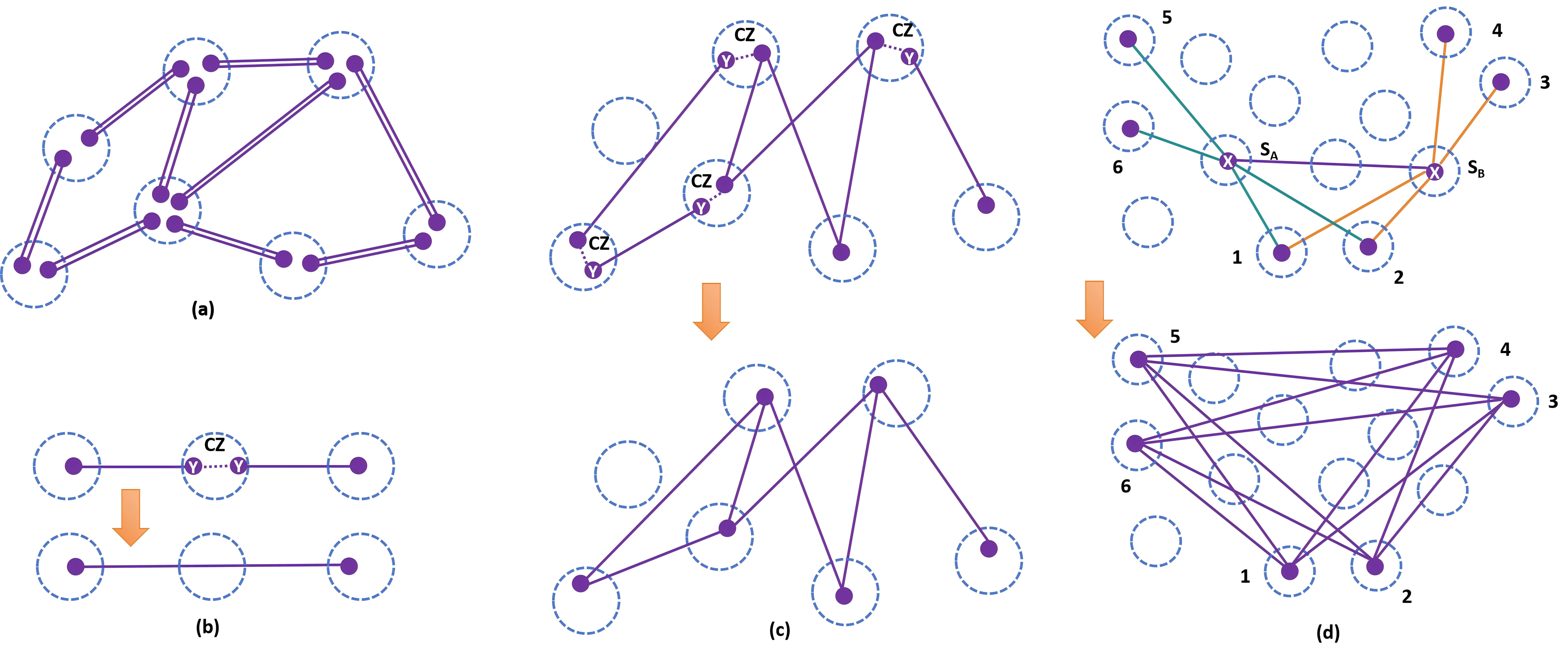}
    \caption{Quantum network model and basic operations. Subfigure (a) shows the physical network topology, where double solid lines are the quantum channels between neighbouring nodes. Subfigure (b) shows the establishment of long-range entanglement: entanglement swapping via local CZ gates and Pauli-$Y$ measurements cascades short-range Bell pairs into a long-range entanglement edge. Subfigure (c) shows state stitching: a newly generated entanglement edge is attached to an existing graph state node via local CZ gates and Pauli-$Y$ measurements, preparing for subsequent topology evolution. Subfigure (d) shows the rank-2 projective measurement: the ancilla pair $S_A,S_B$ is connected to the target nodes by two star branches (cyan and orange), and a joint Pauli-$X$ measurement on the pair concurrently establishes all logical edges between the two branch sets.}
    \label{fig:network_primitives}
\end{figure*}

\subsection{Quantum network model and basic evolution operations}
\label{subsec:network_model}

We adopt a quantum network physical model similar to that of Ref.~\cite{refMMG}. As shown in Fig.~\ref{fig:network_primitives}(a), the quantum network is described by an undirected graph $\mathcal{N} = (V_\mathcal{N}, E_\mathcal{N})$. A dashed circle represents a physical node, which may contain multiple qubits internally. A double solid line $e \in E_\mathcal{N}$ represents a quantum channel connecting neighboring nodes.
At the physical resource layer, the basic resource of the network is the Bell pair spanning a neighboring channel $e$. A Bell state can be converted into an edge between two nodes by a local Hadamard gate; below, we directly regard Bell states as the underlying entanglement resource for graph state edges. In this paper, the byproduct operators induced by Pauli measurements can be decomposed into an outcome-independent local Clifford layer and an outcome-dependent Pauli layer~\cite{refgraph0,refgraph1}. The former is determined at compile time and is applied as deterministic single-qubit operations scheduled with the circuit, without classical feedforward. The latter consists of Pauli operators, which can be tracked and corrected via classical feedforward~\cite{refpauli1}. Under the time-slot model adopted in this paper~\cite{refshiqian,refhuang}, classical communication results arrive within a single time slot, and the tracking and settlement of the Pauli frame occupy no extra time slots. The above deterministic single-qubit corrections are not counted in the metrics of this paper.
We also assume that the physical layer, via entanglement purification and other low-level protocols, can provide high-fidelity neighboring Bell pairs on demand. On this basis, we construct three basic operations for graph state evolution.
\begin{itemize}
    \item \textbf{Long-range entanglement.} As shown in Fig.~\ref{fig:network_primitives}(b), when a remote target node needs an entanglement connection, CZ gates and Pauli-$Y$ measurements are executed sequentially along the shortest path, consuming the Bell pairs on the path and cascading the short-range Bell pairs into a long-range entanglement edge.
    \item \textbf{State stitching.} As shown in Fig.~\ref{fig:network_primitives}(c), a long-range entanglement edge is attached to an existing graph state via a local CZ gate, and the intermediate qubit is then removed by a Pauli-$Y$ measurement, so that the new edge merges seamlessly into the target topology.
    \item \textbf{Rank-2 projective measurement.} After completing the physical connections of the ancilla pair $(S_A, S_B)$ according to the vectors $u, v$ using the above operations, a joint Pauli-$X$ measurement is performed on the pair. This operation is the rank-2 algebraic projection at the physical level. The concrete execution of this operation is illustrated in Fig.~\ref{fig:network_primitives}(d).
\end{itemize}

To illustrate the physical execution of Algorithm~\ref{alg:algebraic} more intuitively, Fig.~\ref{fig:toy_model} shows an example with a 6-particle target graph state. The rank of the adjacency matrix of this target graph state is $\operatorname{rank}(M_{\text{target}})=4$; by Theorem~\ref{thm:complexity}, only $\lfloor 4/2\rfloor = 2$ joint Pauli-$X$ measurement steps are needed to complete the synthesis. The underlying entanglement establishment and topology evolution are shown in the figure.

Figure~\ref{fig:toy_model} demonstrates the key feature of our method: a single joint Pauli-$X$ measurement establishes multiple entanglement edges concurrently in one shot, with the number of edges determined by the vectors $u$ and $v$. This method also reuses entanglement resources efficiently. Unlike the scheme that distributes star subgraphs one by one along Steiner trees, one rank-2 update can concurrently cover multiple target edges. Redundant edges, shown as red lines in the figure, may appear during synthesis; these edges do not belong to the target graph state. Since $1\oplus 1=0$ over $\mathrm{GF}(2)$, all redundant edges are naturally cancelled by subsequent updates in later steps. The total number of steps of the whole process is strictly bounded by Theorem~\ref{thm:complexity} and is independent of the density of the target graph state.

\begin{figure*}[t]
    \centering
    \includegraphics[width=0.8\textwidth]{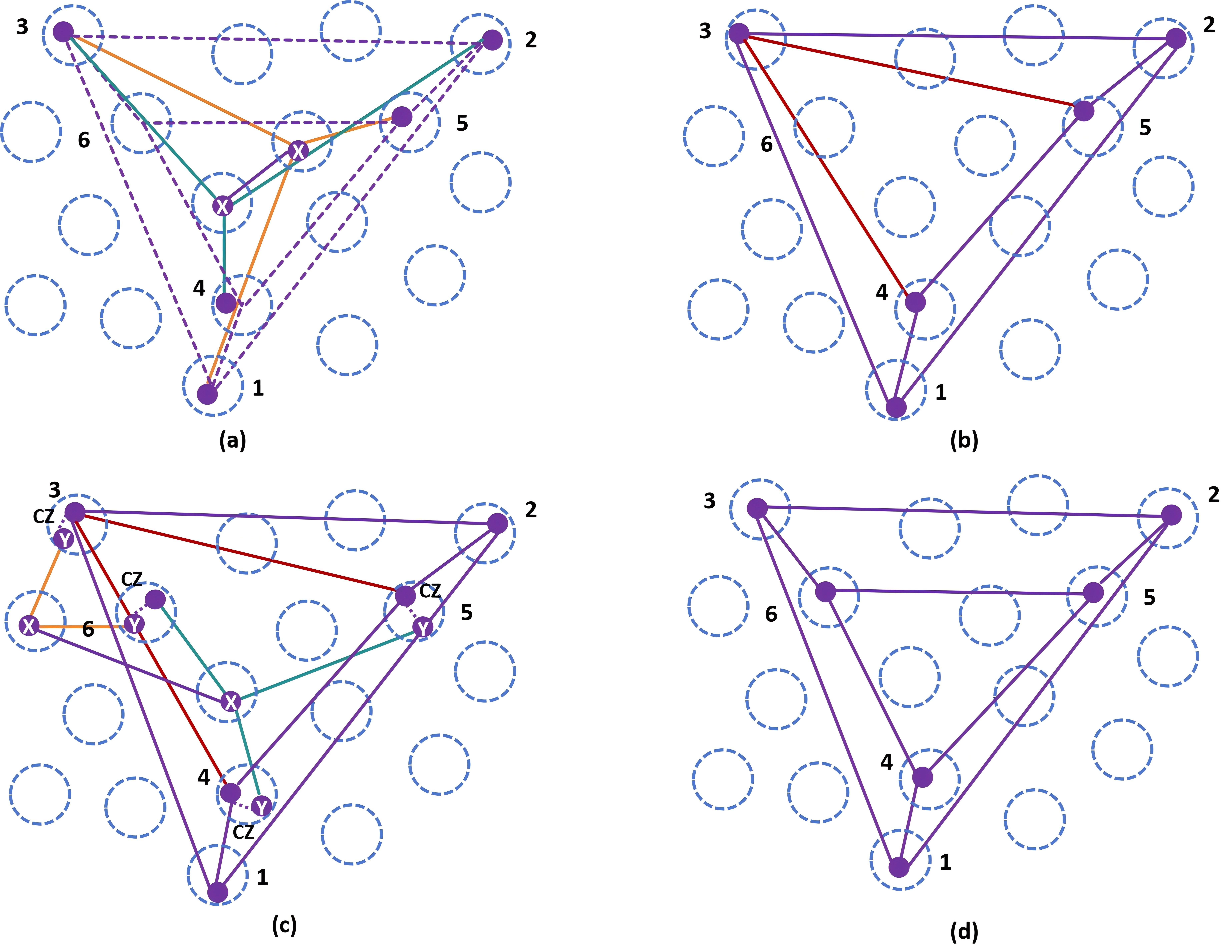}
    \caption{Synthesis example of a 6-node target state. In subfigure (a), $R_{1,2}=1$ is selected; physical entanglement is established according to the difference matrix and the first joint Pauli-$X$ measurement is executed; the cyan and orange lines are the two star branches. Subfigure (b) shows the logical edges generated concurrently after the measurement; purple edges are target edges and red edges are redundant edges. In subfigure (c), a new pivot $R_{3,4}=1$ is selected; after state stitching via CZ gates and Pauli-$Y$ measurements, the second joint measurement is executed. In subfigure (d), the redundant red edges are automatically eliminated, and the purple synthesized graph agrees with the target state shown by the gray dashed lines.}
    \label{fig:toy_model}
\end{figure*}

\subsection{Single-step depth of a rank-2 update}
\label{sec:concurrent}

As described above, to implement a joint Pauli-$X$ measurement at the physical level, a specific entanglement structure must first be established according to the vectors $u$ and $v$. Below, we first define this structure and then define its degree-based single-step depth measure. Let $S_A^{(k)},S_B^{(k)}$ denote the ancilla pair used in step $k$.

\begin{definition}[Dual-star concurrent distribution]
\label{def:dual_star}
For a rank-$2$ update $\Delta R = uv^{\mathrm T}+vu^{\mathrm T}$, the dual-star concurrent distribution refers to the ancilla pair $(S_A,S_B)$ executing three branches simultaneously on the physical network. These three branches are the backbone entanglement between $S_A$ and $S_B$, the star-type entanglement from $S_A$ to the set $\{x:u_x=1\}$, and the star-type entanglement from $S_B$ to the set $\{y:v_y=1\}$.
\end{definition}

Under the time-slot model adopted in this paper, the network operates synchronously by slots. Quantum nodes satisfy the single-interface constraint: within one slot, a node can participate in the establishment of only one entanglement link. Therefore, multiple entanglement links on the same node must be scheduled into different slots and executed serially, while established entanglement is stored by the node and maintained across slots. 
Let $G_{\mathrm{phys}}^{(k)}$ denote the routing graph formed by the physical edges traversed by the three branches of step $k$; 
For the numerical comparison, we use the maximum node degree of this graph as a degree-based slot-depth measure:

\begin{equation}
\label{eq:concurrent_delay}
\widetilde{\mathrm{Depth}}^{(k)} = \Delta\bigl(G_{\mathrm{phys}}^{(k)}\bigr) = \max_{w \in G_{\mathrm{phys}}^{(k)}} \deg(w).
\end{equation}

The maximum node degree $\Delta$ provides a natural degree-based measure
of the single-step scheduling cost. The busiest node must participate in
$\Delta$ edges in that step and can handle only one edge per slot, so
$\Delta$ is a lower bound on the exact slot count. For a general routing
graph, additional scheduling conflicts may require extra slots. We use
$\Delta$ consistently as the slot-depth measure in the numerical comparison.
For the Steiner-tree baseline, this measure coincides with the exact
edge-scheduling depth.

The degree-based time-slot depth is the sum over all steps. On sparse physical topologies with fixed average degree, star distributions all unfold into Steiner trees at the physical layer, and the maximum node degree of the tree is constrained by the physical connectivity, growing slowly with the target density $p$ and quickly saturating. 
Our dual-star distribution and the baseline single-star distribution show no essential difference in order, although merging three branches makes ours somewhat higher. When single-step depths are of the same order, the difference falls on the number of steps. The baseline step count equals the number of decomposed star subgraphs, approaching $N-1$ as the density increases. Our step count does not exceed $\lfloor N/2\rfloor$ and is independent of the density. We can therefore expect our advantage to be larger for denser targets. The numerical experiments in Sec.~\ref{sec:numerical} will verify this.

\subsection{Heuristic algorithm under physical mapping}
\label{subsec:heuristic}
The algebraic algorithm of Sec.~\ref{sec:3} is based on the ideal Bell pair assumption. In real networks, long-distance entanglement is affected by channel attenuation and gate noise; it consumes a large number of raw Bell pairs and requires multiple rounds of entanglement purification to distill entangled pairs reaching the fidelity threshold~\cite{refbriegel1998}. The number of purification rounds grows rapidly with distance and hop count~\cite{refchen,refzhao2022e2e}. For the same algebraic operation, different ancilla positions and different pivot elimination orders can lead to vastly different physical costs.

We model the underlying entanglement resources as follows. Each physical channel independently prepares and buffers short-range Bell pairs between neighboring nodes, and long-range entanglement is obtained by sequentially performing entanglement swapping on the buffered Bell pairs~\cite{refbriegel1998,refpant2019routing}. The resource cost of an edge $e=(a,b)$ is taken as the dominant term of its purification overhead, $C_e=\exp(\alpha_{\mathrm{attn}}L_e)$, where $L_e=\|p_a-p_b\|$ is the Euclidean length and $\alpha_{\mathrm{attn}}$ is the fiber attenuation constant.
One step of dual-star concurrent distribution needs to establish three groups of entanglement links at the physical layer, connecting $S_A$ to the set $\{x:u_x=1\}$, $S_B$ to the set $\{y:v_y=1\}$, and the backbone link between $S_A$ and $S_B$. The star links are laid along a Steiner tree covering all leaf nodes, the backbone is laid along the shortest path, and the three are merged into the routing graph $G_{\mathrm{phys}}^{(k)}$ of that step. Only one entangled pair needs to be established on each physical edge of the routing graph; intermediate nodes cascade them into long-range connections via entanglement swapping, and the total cost equals the sum of edge costs of the routing graph.

After mapping to the physical network, the choice of a suitable pivot $(i,j)$ and the placement of the ancilla pair $(S_A,S_B)$ both affect the total resource overhead. Since a rank-2 update introduces redundant edges while eliminating target edges, and although these redundant edges are cancelled in subsequent steps, they all incur distribution costs physically. Therefore, let $R$ be the current difference matrix and $|E_R|=\tfrac{1}{2}\|R\|_1$ the total number of target edges not yet established. After one step with pivot $(i,j)$, the difference matrix becomes $R_{\mathrm{next}}$, and we define the net progress of edge synthesis:
\begin{equation}
\label{eq:net_progress}
\Delta_{\mathrm{net}}(i,j) = |E_R| - |E_{R_{\mathrm{next}}}| = \tfrac{1}{2}\bigl(\|R\|_1 - \|R_{\mathrm{next}}\|_1\bigr).
\end{equation}
The larger $\Delta_{\mathrm{net}}$, the more target edges are eliminated on net in that step. The more redundant edges, the lower this value is pulled.

A natural approach would be to jointly optimize the pivot $(i,j)$ and the placement of the ancilla pair $(S_A,S_B)$, minimizing the per-progress cost $C^{(k)}/\Delta_{\mathrm{net}}$ of each step. However, joint optimization means that a complete ancilla placement search must be rerun for every candidate pivot, adding a factor of $|E_R|$ to the complexity compared with placement alone. Note that $\Delta_{\mathrm{net}}$ is a pure $\mathrm{GF}(2)$ quantity, determined only by the difference matrix and independent of the physical position of the ancilla pair in the network. Meanwhile, the placement of the ancilla pair $(S_A,S_B)$ can be solved exactly once the pivot $(i,j)$ is fixed and $u$ and $v$ are determined. Therefore, we can replace the joint optimization by a two-stage optimization: in the first stage, among all candidate pivots satisfying $R_{i,j}=1$, we choose the one with the largest $\Delta_{\mathrm{net}}$; in the second stage, we select the minimum-cost ancilla pair for this pivot only. Each of the two subproblems is still solved optimally on its own. The pivot ordering in the first stage adopts a greedy strategy, which is the only greedy component of the algorithm.
For the ancilla placement in the second stage, the most direct criterion would be to solve the Steiner tree cost of the routing graph exactly for each candidate; however, there are up to $O(|\mathcal{V}_{\mathrm{ancilla}}|^2)$ candidate pairs, and constructing a Steiner tree for each is too expensive. In fact, we can use a shortest-path approximation: with $C_e$ as edge weights, precompute the shortest paths between all node pairs, and let $E_{a,b}$ denote the shortest-path edge set from $a$ to $b$. Take the union of the shortest-path edge sets of the three branches as the approximate edge set of step $k$,
\begin{equation}
\label{eq:edge_union}
E^{(k)} = E_{S_A S_B} \cup \Bigl(\bigcup_{x:\,u_x=1} E_{S_A,x}\Bigr) \cup \Bigl(\bigcup_{y:\,v_y=1} E_{S_B,y}\Bigr),
\end{equation}
and the sum of its edge costs is the approximate cost of the candidate ancilla pair,
\begin{equation}
\label{eq:cost_func}
C^{(k)}(S_A, S_B; u, v) = \sum_{e\in E^{(k)}} C_e .
\end{equation}

The cost given by the shortest-path union is an upper bound on the optimal Steiner tree cost; for the 2-approximate Steiner tree actually used, it is not a strict upper bound.
Under the same pivot, all candidates face the same leaf set, and the degree of merging of common path segments mainly depends on the spatial distribution of the leaves, so the deviation between the approximate cost and the true cost is relatively stable across candidates; ranking by the approximate cost thus selects an ancilla pair close to optimal. The approximation is used only for ranking; all performance metrics are ultimately counted on the actual routing graphs.
The optimal ancilla pair is obtained by exhaustive search over the candidate set, with complexity $O(|\mathcal{V}_{\mathrm{ancilla}}|^2)$ edge-set unions, which is a polynomial-time exact search. 
Our model does not impose additional capacity or edge-disjointness constraints on the concurrent branches, so the ancilla-placement search defined above remains polynomial in the number of candidate ancilla nodes. The complete procedure is given in Algorithm~\ref{alg:heuristic}.

\begin{algorithm}[htbp]
\AlgCaption{Heuristic algorithm under physical mapping}
\label{alg:heuristic}
\begin{algorithmic}[1]
\REQUIRE Target adjacency matrix $M_{\mathrm{target}}$; all-pairs shortest paths and their edge sets precomputed with edge weights $C_e=\exp(\alpha_{\mathrm{attn}}L_e)$; ancilla node set $\mathcal{V}_{\mathrm{ancilla}}$; physical topology $G_{\mathrm{phys}}$
\ENSURE Operation sequence $\mathcal{S}$, total physical cost $\mathrm{Cost}_{\mathrm{tot}}$
\STATE $R \leftarrow M_{\mathrm{target}}$, $\mathrm{Cost}_{\mathrm{tot}} \leftarrow 0$, $\mathcal{S} \leftarrow \emptyset$
\WHILE{$R \neq \mathbf{0}$}
    \STATE Among all candidate pivots with $R_{i,j}=1$, choose $(i,j)$ with the largest $\Delta_{\mathrm{net}}$ \COMMENT{pure $\mathrm{GF}(2)$ criterion}
    \STATE $u \leftarrow Re_j$, $v \leftarrow Re_i$
    \STATE Exhaustively search the ancilla pair $(S_A,S_B)$ over $\mathcal{V}_{\mathrm{ancilla}}$, compute the approximate cost by Eqs.~(\ref{eq:edge_union})--(\ref{eq:cost_func}), and take the minimum \COMMENT{complexity $O(|\mathcal{V}_{\mathrm{ancilla}}|^2)$}
    \STATE Execute the dual-star concurrent distribution with the selected $(i,j,S_A,S_B)$ and complete the joint Pauli-$X$ measurement; count the physical cost $W^{(k)}$ of this step on the actual routing graph; record the operation $\mathcal{O}^\ast$
    \STATE $R \leftarrow R \oplus (uv^{\mathrm{T}}+vu^{\mathrm{T}}) $
    \STATE $\mathrm{Cost}_{\mathrm{tot}} \leftarrow \mathrm{Cost}_{\mathrm{tot}} + W^{(k)}$, $\mathcal{S} \leftarrow \mathcal{S}\cup\{\mathcal{O}^\ast\}$
\ENDWHILE
\RETURN $\mathcal{S}$, $\mathrm{Cost}_{\mathrm{tot}}$
\end{algorithmic}
\end{algorithm}

\section{Numerical experiments and result analysis}
\label{sec:numerical}
In this section, we conduct Monte Carlo experiments on physical topologies with fiber attenuation, using the Steiner baseline as the control, and evaluate the performance along three dimensions: time, consumed resources, and gate operation counts.

\subsection{Baseline and experimental setup}
\label{sec:setup}

For the baseline, we take Ref.~\cite{refMMG} as the skeleton and strengthen it where necessary; that is, the target graph state is greedily decomposed into star subgraphs, each time taking the node with the largest degree in the residual graph together with its unfinished neighboring edges to form a star. The star subgraph in each step is generated using the lower-gate-cost method of Refs.~\cite{refckl,refscalable}, and the entanglement establishment follows the Steiner tree idea. The slot count within a step is the maximum node degree of the routing graph, and the steps are strictly serial. In each experiment, the two protocols are executed on the same physical graph and target graph instances.

In the code implementation, the Steiner tree is solved by \texttt{steiner\_tree} in NetworkX, i.e., the Mehlhorn 2-approximation algorithm~\cite{refmehlhorn}. The root cause of the baseline time increasing with density is the growth in the number of star subgraphs from the greedy decomposition, reaching $N-1$ stars for a complete graph; this trend is independent of the approximation accuracy of the tree, so the core conclusion in the time dimension is not affected by the approximation.
Gates and measurements are counted by the following rules~\cite{refckl,refscalable}. Any node of degree $d\ge 2$ in the routing structure needs to locally stitch a star state of $d$ qubits, counted as $d-1$ CZ gates. Leaf nodes of degree $1$ incur no gates. A logical node belonging to the final graph state keeps one qubit as the physical carrier and only measures out the remaining $d-1$ ancilla qubits. Pure routing nodes do not enter the final graph state and must measure out all $d$ qubits. The ancilla pair $(S_A,S_B)$ is counted as pure routing nodes, and their joint Pauli-$X$ measurement is already included in the above $d$ measurements and is not counted again.

The physical network is randomly generated by the Waxman model~\cite{refwaxman1988routing}. Nodes are sampled in the unit square with a minimum separation enforced. Candidate edges are generated with probability $P(a,b)=\beta_w\exp(-d(a,b)/(\alpha_w L))$, with $L=\sqrt{2}$ and $\alpha_w=0.10$, and candidate edges longer than $3.5\,\alpha_w L$ are removed. For each node number $N_{\mathrm{phys}}$, $\beta_w$ is adjusted so that the average degree is fixed at $\langle k\rangle=3$, ensuring that $\alpha_w$ only affects the spatial distribution pattern of edges without changing the total number of edges. If the generated graph is disconnected, edges are added one by one between connected components at the Euclidean-nearest node pairs to bridge them. The resource cost of each physical edge is counted by fiber attenuation, $C_e=\exp(\alpha_{\mathrm{attn}}\cdot L_e)$, with $\alpha_{\mathrm{attn}}=0.5$.
The target graph states are ER random graphs $G(N,p)$~\cite{referdds1959random,referdHos1960evolution}; if disconnected, edges are added between connected components one by one to make them connected. The parameter grid is $N\in[10,90]$ (step 10) and $p\in[0.1,0.9]$ (step 0.1). The physical scale is $N_{\mathrm{phys}}=3N$. The $N$ target nodes are randomly chosen from the physical nodes, and the ancilla pair of our scheme is chosen from the remaining $2N$ nodes. For each $(N,p)$ combination, 300 independent experiments are performed. 
Time is evaluated using the degree-based time-slot model of Sec.~\ref{sec:concurrent}:
the slot-depth measure of each step is the maximum node degree of the
routing graph, and the steps are accumulated serially.

\begin{figure*}[t]
\centering
\includegraphics[width=0.9\textwidth]{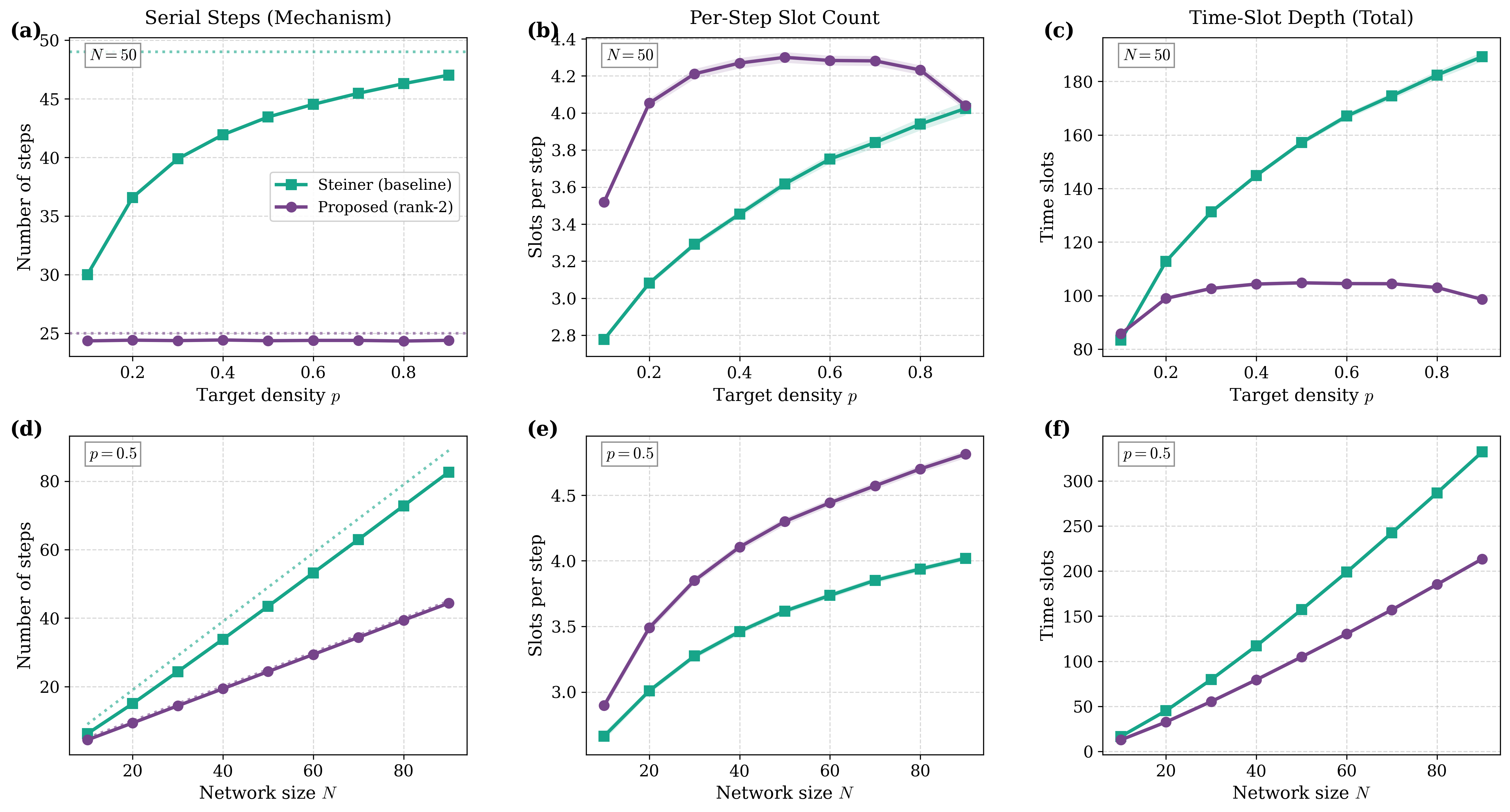}
\caption{Time performance and mechanism comparison. The physical topology has a fixed average degree $\langle k \rangle = 3$; each data point is taken from 300 Monte Carlo experiments, and the shaded bands are $95\%$ confidence intervals. Subfigures (a) and (d) show the number of serial steps, (b) and (e) the average degree-based slot-depth measure per step, and (c) and (f) the total degree-based time-slot depth. Our step count remains stable near $\lfloor N/2\rfloor$ and does not vary with $p$, while the baseline step count grows with $p$ toward $N-1$; the per-step counts are of the same order for both protocols and grow only slowly, so the total depth is mainly determined by the number of serial steps, and the time advantage of our protocol keeps expanding with density and scale. The dotted lines in (a) and (d) are theoretical references: $N-1$ is the worst-case step count of the baseline, and $\lfloor N/2 \rfloor$ is the step-count upper bound of our method given by Theorem~\ref{thm:complexity}.}

\label{fig:time_perf}
\end{figure*}

\subsection{Time cost}
\label{subsec:time_cost}

We first analyze the performance of the protocols in terms of time cost.
Figure~\ref{fig:time_perf} compares the number of serial steps of our protocol with the baseline. As shown in subfigure (a), the step count of our method remains stable near $N/2$, while the baseline step count grows with $p$. This is because the Steiner baseline step count equals the number of decomposed star subgraphs: as the target graph density $p$ increases, the step count gradually approaches $N-1$. In contrast, the step count of our method is always determined by the rank of the adjacency matrix of the target graph state, upper bounded by $\lfloor N/2\rfloor$, and is not affected by $p$.

Meanwhile, since the average degree of our physical topology is set to $\langle k \rangle = 3$, the maximum node degrees of the two protocols are similar, ours being slightly higher because the dual-star routing graph merges three branches. Subfigures (b) and (e) confirm this: the per-step slot counts of both protocols remain of the same order and grow only slowly, whereas the step counts grow linearly. Consequently, the total time-slot depth of the two schemes is mainly determined by the number of serial steps.
Subfigure (c) demonstrates this. At $p=0.1$, the time-slot depths of the two protocols are basically the same. As $p$ increases, the baseline's time-slot depth rises nearly linearly, while our method's time-slot depth, after a period of slow growth, forms a plateau around $p\approx0.4$--$0.6$ and then falls back. The gap grows larger and larger. Subfigure (d) shows the same step-count comparison versus $N$ at fixed $p=0.5$. Subfigure (f) shows the variation of the time-slot depth with the network size $N$ at fixed $p=0.5$; the growth rate of our method is again clearly lower than that of the baseline. This demonstrates the good scalability of the protocol, and the time advantage of the protocol becomes more pronounced as the density increases.

\begin{figure*}[htbp]
    \centering
    \includegraphics[width=\textwidth]{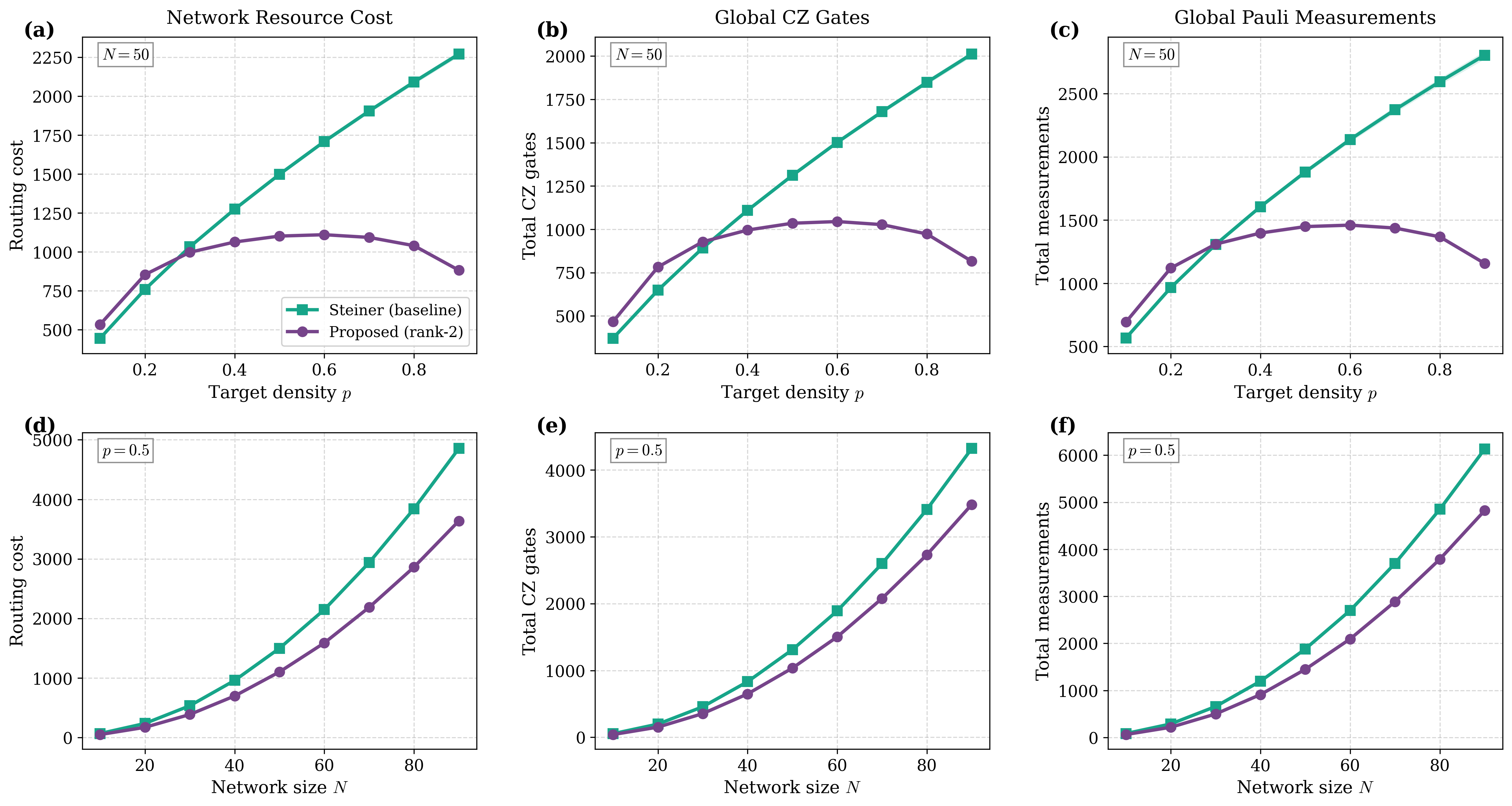}
    \caption{Global overhead comparison. The three columns show the entanglement resource overhead, the total number of CZ gates, and the total number of Pauli measurements, respectively. The upper row fixes $N=50$ and varies the target density $p$; the lower row fixes $p=0.5$ and varies the network size $N$. In the low-density region, the baseline is slightly better in resources and operation counts; beyond $p\approx0.3$, our method comprehensively overtakes the baseline, and the higher the density, the more obvious the advantage; as $N$ increases, both protocols maintain superlinear growth.}
    \label{fig:overheads}
\end{figure*}

\subsection{Resource consumption and quantum operation overhead}
\label{subsec:overheads}

Figure~\ref{fig:overheads} shows the trends of the underlying entanglement resources consumed by the two protocols, the total number of CZ gates, and the total number of Pauli measurements as functions of the target graph state edge density $p$ and node number $N$. As can be seen from the figure, the trends of the three metrics are similar, indicating that the underlying cause is the same.
Specifically, subfigures (a--c) show that, in the low-density region $p \lesssim 0.3$, the Steiner baseline is lower than our method in both entanglement resources and quantum operation overhead. This is because the Steiner tree method itself is designed with minimizing the entanglement resource overhead as the optimization objective, whereas our algebraic synthesis method must establish a dual-star concurrent distribution at every step, so in the low-density region its resource utilization efficiency is lower than the baseline.
However, as $p$ increases, each rank-2 update of our algebraic synthesis method generates more target entanglement edges simultaneously, the entanglement resources of the dual-star concurrent distribution are highly reused, and the construction cost per edge decreases. The Steiner baseline, on the other hand, has to rebuild a Steiner tree for every star it distributes, and its resources, CZ gates, and measurement operations grow with the number of star subgraphs. As a result, beyond $p>0.3$, our method comprehensively overtakes the baseline in entanglement resources and quantum operation overhead, and the higher the density, the more obvious the advantage.
Subfigures (d)--(f) likewise show the variation of the corresponding overheads of the two protocols with the node number $N$ at fixed density $p=0.5$. The trends of the two protocols are similar, both maintaining superlinear growth as $N$ increases.

In summary, the experimental results show that our algebraic graph state synthesis method is, in the time dimension, a scheme with fewer steps and lower time-slot depth than the Steiner baseline. In terms of entanglement resource utilization efficiency and quantum operation consumption, our method is close to the baseline and shows an advantage in the dense region.

\section{Conclusion and outlook}
\label{sec:conclusion}

This paper transforms the graph state synthesis problem into a process of successively reducing the rank of the target graph state adjacency matrix over $\mathrm{GF}(2)$, and proves that the number of steps required to synthesize an arbitrary $N$-node graph state is $\operatorname{rank}(M_{\mathrm{target}})/2$, no more than $\lfloor N/2\rfloor$, with this upper bound independent of the target graph density.
At the physical implementation level, each joint Pauli-$X$ measurement step is mapped to a dual-star concurrent distribution.
We modeled the protocol on Waxman topologies with fiber attenuation, gave a heuristic algorithm, and conducted Monte Carlo numerical experiments against the strengthened Steiner baseline.
The experimental results show that, thanks to the gap between $\lfloor N/2\rfloor$ and the baseline step count $(N-1)$, our method is superior in time-slot depth almost everywhere, except in the extremely sparse region ($p\approx0.1$) where it is basically on par with the baseline. The entanglement resource overhead, total number of CZ gates, and total number of quantum measurements also become significantly better beyond $p\approx0.3$, and the denser the target graph state, the more obvious the advantage.

Looking ahead, we believe several directions are worth further study. First, Algorithm~\ref{alg:heuristic} is currently a heuristic placement based on resource cost; whether better approximation algorithms exist is a question worth further investigation. Second, the experimental results show that our method has clear advantages in the high-density region; we can further study its applications in preparing specific dense graph states such as complete multipartite graph states. Third, the number of preparation steps of our method is determined by the rank of the adjacency matrix of the target graph state, which means that, by LC preprocessing of the target graph state, i.e., using LC operations to find an equivalent graph state with a smaller rank, the number of synthesis steps could be further reduced; related questions also deserve deeper study. Finally, regarding the generality of our method: although this work is developed within the framework of synthesizing graph states in distributed networks, the underlying algebraic elimination logic applies to all graph-state-based systems, and the transfer of this method to other scenarios is also a direction for future exploration.

\subsection*{Authors' contributions}
X.Z. conceived the main approach, developed the theoretical framework and protocol, carried out the numerical simulations, and prepared the initial manuscript. L.C. provided the initial research direction and conceptual guidance, supervised the project, and revised the manuscript. C.-T.L. contributed to supervision, discussions, and manuscript review. All authors discussed the results and approved the final manuscript.
Large language models were used only for grammar correction and language polishing, with all edits reviewed by the authors.

\section*{Data Availability}
The simulation scripts and numerical data that support the findings of this study are openly available at: \url{https://github.com/ky87zheng/graph-state-gf2}.

\bibliographystyle{quantum}
\bibliography{references}

\end{document}